\documentclass[11pt]{article}

\usepackage[T1]{fontenc}        
\usepackage{hyperref}
\usepackage{cite}
\usepackage{fullpage}
\usepackage{amsfonts}
\usepackage{amssymb}
\usepackage{amsmath}
\usepackage{amsthm}
\usepackage{xcolor}

\usepackage{booktabs}
\aboverulesep=0ex
\belowrulesep=0ex

\usepackage{pifont}
\newcommand{\cmark}{\ding{51}}%
\newcommand{\xmark}{\ding{55}}%

\newtheorem{theorem}{Theorem}
\newtheorem{claim}{Claim}
\newtheorem{lemma}{Lemma}
\newtheorem{definition}{Definition}
\newtheorem{conjecture}{Conjecture}

\newcommand{\omv}{\textnormal{\textsf{OMv}}}
\newcommand{\oumv}{\textnormal{\textsf{OuMv}}}
\usepackage{url}  


\DeclareMathOperator{\cost}{cost}

\begin{document}
\title{On the Complexity of
	Weight-Dynamic Network Algorithms%
		\thanks{Research supported by the Austrian Science Fund (FWF) 
		and netIDEE SCIENCE project P 33775-N and
		the Vienna Science and Technology Fund (WWTF) project
		ICT19-045 (WHATIF).}
}

\author{Monika Henzinger \quad Ami Paz \quad Stefan Schmid \\
{\small Faculty of Computer Science, University of Vienna}
}
\date{}

\maketitle

\begin{abstract}
While operating communication networks adaptively
may improve utilization and performance, frequent
adjustments also introduce an algorithmic challenge:
the re-optimization of traffic engineering solutions
is time-consuming and may limit the
granularity at which a network can be adjusted.
This paper is motivated by question whether the reactivity of a network
can be improved by re-optimizing solutions \emph{dynamically}
rather than \emph{from scratch}, especially if inputs such as link weights 
do not change significantly. 

This paper explores to what extent dynamic algorithms can be used
to speed up fundamental tasks in network operations. 
We specifically investigate optimizations
related to  traffic engineering (namely shortest paths and maximum flow
computations), but also consider spanning tree
and matching applications. 
While prior work on dynamic graph algorithms focuses on link insertions and deletions,
we are interested in the practical problem of \emph{link weight changes}.

We revisit existing upper bounds 
in the weight-dynamic model, and present several
novel lower bounds on the amortized runtime for
recomputing solutions.
In general, we find that the potential performance gains depend on the application, 
and there are also strict
limitations on what can be achieved, even if link weights change only slightly.
\end{abstract}

\section{Introduction}

As communication networks are often an expensive infrastructure,
making best use of the given resources is important.
For example, over the last decades, many efficient traffic engineering 
algorithms have been developed which allow ISPs to improve network utilization
and performance~\cite{vissicchio2014sweet,b4}.
These algorithms typically rely on the optimization of link weights
(in particular the IGP link weights),
which in turn determine the shortest paths computed by the 
ECMP protocol~\cite{ecmp}.

In principle, a more dynamic and adaptive operation of communication
networks has the potential to significantly improve
the network efficiency: traffic patterns often feature
a high degree of temporal structure~\cite{sigmetrics20complexity,benson2010understanding,pujol2019steering},
which could be exploited for optimizations over time.
Motivated by this potential, we have recently seen great efforts
to render networks more flexible, adaptive, or even ``self-driving''~\cite{pieee19,FeamsterR2017,reda2020path,fischer2006replex}.

This paper is motivated by the observation that a more dynamic network operation
also introduces an algorithmic challenge. Going back to our traffic engineering
example, the state-of-the-art approach of re-computing shortest paths or
optimal flow allocations 
 \emph{from scratch} can be time-consuming
\cite{pujol2019steering},
and may eventually become the bottleneck which limits network reactivity.
The high runtimes may further prevent operators to conduct fast what-if analyses,
e.g., experimenting with different link weights using tools such as FlowVisor~\cite{pujol2019steering}.
While such a high runtime may be unavoidable in some cases, 
in practice it is unlikely that many link weights need to be changed significantly in a short time~\cite{vanbever2011seamless}. 
This introduces an optimization opportunity, and we ask: 
\begin{itemize}
\item Given a small change of link weights, 
can we recompute a solution \emph{dynamically}, i.e.,
based on the current solution, significantly faster than recomputing it from scratch?
\end{itemize}
Link weight changes are not only an important operation in the context of 
traffic engineering and shortest path routing, but link weights also define
other fundamental network structures, such as spanning trees \cite{perlman1985algorithm}.

We are in the realm of dynamic graph algorithms, an active research area in
theoretical computer science~\cite{Henzinger18}. However, while existing literature in this
area primarily revolves around the question whether solutions to graph theoretical problems can be recomputed
efficiently upon \emph{additions and removals} of links, given the networking context, we are many times interested in quickly reacting to changes of the \emph{link weights} only.

\paragraph{Our contributions.}
Motivated by the desire to speed up traffic engineering decisions,
we present novel \emph{weight-dynamic graph algorithms},
network algorithms which dynamically recompute solutions upon link
weight changes, and
initiate the study of lower bounds on the time per operation for weight-dynamic graph algorithms. 
Our results are summarized in Table~\ref{table: results}.

Dynamic algorithms are usually composed of a database that is maintained during input changes. They present trade-offs between two parameters:
\emph{update time}, which is the time it takes to update the database upon a change (edge insertion or deletion, or edge-weight change);
and \emph{query time}, which is the time it take to answer a query using the database (e.g., compute the distance between a pair of nodes).
Our algorithms all have a constant query time, and different update times. Our lower bounds show that in some cases, either the update time or the query time must be high.

\begin{table*}
	\centering
	\small
	 \begin{tabular*}{\linewidth}{|l|lr|l|lr|c|}
		\toprule
		Problem 
		& \multicolumn{2}{l|}{Sta. upper bound} 
		& Dyn. upper bound~
		& \multicolumn{2}{l|}{Lower bound}
		&Cond.?\\ 
		\toprule
		Shortest $(s,t)$-path
		& $O(m+n\log n)$
		& \cite{Dijkstra59}
		&$O(\sqrt{m}\log n)$
		\hfill\cite{FrigioniMN00}	
		& 
		\begin{tabular}{@{}l@{}}
			$\Omega(n^{1-\epsilon})$ update \\
			$\Omega(n^{2-\epsilon})$ query
		\end{tabular}
		& Thm.~\ref{thm: shortest st path - lb} &\cmark
		\\
		\midrule
		Max $(s,t)$-flow
		&$O(m^{10/7}W^{1/7})$ 
		&\cite{Madry16}
		& $O(m)$
		\hfill Thm.~\ref{thm: mx st flow with only weight increases - alg}
		& 
		\begin{tabular}{@{}l@{}}
			$\Omega(n^{1-\epsilon})$ update \\
			$\Omega(n^{2-\epsilon})$ query
		\end{tabular}
		& Thm.~\ref{thm: mx st flow with only weight increases - lb} &\cmark
		\\
		\midrule
		Maximum matching
		& $O(m^{10/7})$ 
		& \cite{Madry16}
		& $O(m)$
		\hfill Thm.~\ref{thm: mwm - alg}
		& 
		\begin{tabular}{@{}l@{}}
			$\Omega(n^{1-\epsilon})$ update \\
			$\Omega(n^{2-\epsilon})$ query
		\end{tabular}
 		& Thm.~\ref{thm: mwm with only weight increases}
		&\cmark
		\\
				\midrule
		MST
		& $O(m)$
		& \cite{KargerPT95}
		& $O\left(\frac{\log^4 n}{\log \log n}\right)$ 
		\hfill \cite{HolmRW15}
		& 
		\begin{tabular}{@{}l@{}}
			$\Omega(\log n)$ update \\
			$\Omega(\log n)$ query
		\end{tabular}
		& Thm.~\ref{thm: mst - lb}
		&\xmark
		\\
		\bottomrule
	\end{tabular*}
	~
	\caption{Our results vs. some prior results on edge-dynamic algorithms. 
	\textnormal{\newline\small The dynamic upper bounds are for update times, and have constant query time. 
	The conditional lower bounds should be interpreted as follows: in a graph with $m=\Theta(n^2)$ edges there is no algorithm that has an update time of $O(n^{1-\epsilon})$ and a query of $O(n^{2-\epsilon})$ for any $\epsilon > 0$;
	the MST lower bound states there is no algorithm that has an update time of  $o(\log n)$ and a query of $o(\log n)$.
	The shortest paths lower bound holds even for $(5/3-\delta)$ approximations, for any $\delta>0$.
	$W$ denotes the maximum weight of an edge in the graph.}
}
	\label{table: results}
\end{table*}

We first consider fundamental problems related to routing,
and in particular, the computation of shortest paths and maximum
flows. 
Computing a shortest $(s,t)$-path for a new network can be done by a simple application of Dijkstra’s algorithm~\cite{Dijkstra59} in $O(m+n\log n)$ time.
This can be improved to $O(\sqrt{m}\log n)$ time per update in a dynamic network with only edge-weight changes~\cite{FrigioniMN00}.
	We prove that the above algorithm is almost optimal:
\begin{itemize}
	\item
	For any two constants $\epsilon,\delta>0$,
	the amortized update time for a $(5/3-\delta)$-approximation of 
	the shortest $(s,t)$-path problem is $\Omega(n^{1-\epsilon})$, even when edge weights can only change by a small additive constant 
	(this bound is consistent with the upper bound, as the lower bound is proved on a graph with $m=\Theta(n^2)$).
	This yields strong lower bounds on the ability to pre-compute the effect of edge-weight changes on the network flow, even when the changes are only by a constant factor
	and if a multiplicative error is allowed.
\end{itemize}

In terms of the maximum flow problem, we show: 
\begin{itemize}
	\item
	There is a simple algorithm that maintains the maximum $(s,t)$-flow in $O(m)$ time per change.
	\item 	
	The amortized update time for maximum $(s,t)$-flow is $\Omega(n^{1-\epsilon})$.
\end{itemize}

Motivated by these results, we 
then extend our study of weight-dynamic algorithms to
applications related to load balancing:
we consider different matching problems, including 
maximum weight matching, $b$-matching~\cite{AhnG14}, and 
semi-matching~\cite{HarveyLLT06}.
We extend the definition of semi-matchings to weighted graphs in a natural way, and show:
\begin{itemize}
	\item
	For any constant $\epsilon>0$,
	the amortized update time for a weighted maximum matching, semi-matching, or $b$-matching, 
	is $\Omega(n^{1-\epsilon})$.
\end{itemize}

The above lower bounds
hold under the popular assumption of the \omv{} conjecture~\cite{HenzingerKNS15} from complexity theory, discussed later.
Finally, we study another basic network problem, the computation of a minimum weight spanning tree (MST), for which we prove an unconditional lower bound, as follows.
\begin{itemize}
	\item
	There is an algorithm that maintains an MST in 
	$O(\log^4 n /\log \log n)$ amortized time~\cite{HolmRW15}.
	While this algorithm is for edge addition and deletion, it can be adapted to handle edge-weight changes in a trivial manner.
	\item
	The amortized update time of MST is $\Omega(\log n)$; this bound is unconditional.

\end{itemize}

All our lower-bound results hold for deterministic as well as for randomized algorithms with error probability at most~$1/3$, and also when amortized over a sequence of changes.

In general, we hope that our insights can inform the networking community what can and
cannot be achieved with such dynamic algorithms, and believe that the notion of 
``weight-dynamic'' network algorithm may also be of interest for future research in the theory community.

\paragraph{Organization.}
After defining the weight-dynamic model in Section~\ref{sec:prelim},
we consider its application for 
the shortest $(s,t)$-paths problem (Section~\ref{sec: sp}),
the maximum $(s,t)$-flow problem (Section~\ref{sec: max flow}),
different matching problems (Section~\ref{sec: matchings}),
and finally the spanning tree problem (Section~\ref{sec: mst}).
We review related work in Section~\ref{sec:relwork} and
conclude our contribution in Section~\ref{sec:conclusion}.

\section{Preliminaries}
\label{sec:prelim}

We consider dynamic graphs where only the \emph{edge weights} change, while the underlying structure of the graph remains intact.
We are particularly interested in the practically relevant scenario where
link weights do not change dramatically at once: an operator is unlikely
to adjust weights quickly and globally \cite{vanbever2011seamless}.
Moderate changes render the problem different: for example, in distance-related problems (shortest paths, MST), removing an edge in the ``standard'' model is equivalent to giving it a very high weight in our model.
In this paper, we consider a more restricted model, where between any two consecutive graphs, the edge weight differences are limited by an additive constant.

Interestingly, while dynamic graph algorithms are well explored in the more theoretical literature (see e.g.~\cite{Zwick_98,RodittyZ11,RodittyZ12,HenzingerKN16,HenzingerKN16a,HenzingerKN18,GutenbergW20,GutenbergW20a,GutenbergW20b} for work on exact and approximate shortest paths), 
most existing work considers scenarios where entire links change, and hence
these results do not directly apply to network optimization problems such as traffic engineering where 
only
the weights of links change. 
 
\subsection{Our model}
We consider a simple weighted graph $G=(V,E,w)$
with $n=|V|$ nodes, $m=|E|$ edges, and an edge-weight function $w:E\to\{1,\ldots,W\}$ giving each edge a positive integer weight bounded by some value $W$ (that may depend, e.g., on~$n$).

A fixed graph with dynamic edge weights is modeled as a sequences $G_1,G_2,...$ of graphs, finite or infinite, 
where each $G_i=(V,E,w_i)$ has the same topology but 
possibly a different edge-weight function.
All edge weight functions have the same domain, i.e., they are of the form $w_i:E\to\{1,\ldots,W\}$ for the same bound~$W$.

We focus on \emph{bounded edge-weight changes}:
for a given constant $c$, for any $i\geq 1$ and edge $e\in E$, we have $|w_{i+1}(e)-w_i(e)|\leq c$.
Our lower bounds hold also in a less strict regime, where the edge weight changes are not bounded by a constant, and can even be multiplicative.

\subsection{Problem definitions}
We consider several classic problems on weighted graphs, all having important applications in networking.

\paragraph{Shortest $(s,t)$-path}
An $(s,t)$-path in a graph $G=(V,E,w)$ connecting two nodes $s,t\in V$ is a sequence $P=(s=v_0,v_1,\ldots,v_k=t)$ of nodes, such that for each $0\leq i<k$, $(v_i,v_{i+1})\in E$. 
The weight (usually representing length) of such a path is $w(P)=\sum_{i=0}^{k-1}w(v_i,v_{i+1})$.
Given two nodes $s,t\in V$, a \emph{shortest $(s,t)$-path} is an $(s,t)$-path with minimal weight, i.e., such that no other $(s,t)$-path in $G$ has strictly lower weight.
The goal in the shortest $(s,t)$-path problem is to find such a path, or evaluate its length.

\paragraph{Maximum $(s,t)$-flow}
An $(s,t)$-flow in a graph $G=(V,E,w)$ is a function $f:V\times V\to \{0,\ldots,W\}$ assigning each directed edge an amount of flow. 
Formally, we require for each $u,v\in V$:
if $(u,v)\notin E$ then $f(u,v)=0$ (and specifically $f(v,v)=0$);
if $u\notin\{s,t\}$ then $\sum_{u'\in V}w(u,u') = \sum_{u'\in V}w(u',u)$;
and,
if $(u,v)\in E$ then $f(u,v)\leq w(u,v)$.
The \emph{value} (or amount) of such a flow is defined to be $\sum_{u'\in V}f(s,u')$.
We define the residual weight of each edge $e\in E$ as $w(e)-f(e)$, and the \emph{residual graph} as a graph with the same structure but with the residual weights.

Given two nodes $s,t\in V$, a \emph{maximum $(s,t)$-flow} is an $(s,t)$-flow of maximum value.
The goal in the maximum $(s,t)$-flow problem is to find such a flow its value.

\paragraph{Matchings}
A \emph{matching} in a graph is a set $M\subseteq E$ of its edges such that no two edges intersect, i.e., for each $e,e'\in M$ we have $|e\cap e'|\neq 1$. 
In a weighted graph, the weight of a matching $M$ is $\sum_{e\in M}w(e)$.
We also study the following extensions of matchings; note that they not always constitute a legal matching. 
A \emph{$b$-matching}~\cite{AhnG14} is a natural extension of the notion of matching, where each node can take part in at most $b$ edges of $M$.
A further extension of this gives a different bound $b_v$ to each node $v\in V$, as follows.
Given a graph $G=(V,E,w)$ and a vector $b$ where $b=(b_v)_{v\in V}$, a $b$-matching in $G$ is a set $M\subseteq E$ of edges such that for each $v\in V$, $|\{u\mid \{v,u\}\in M\}|\leq b_v$.
The weight of a $b$-matching is naturally defined as the sum of edge weights in $M$.
This definition also applies when $M$ is a multi-set.
A $b$-matching with the $b$-vector satisfying $b_v=1$ for all $v\in V$ is a (standard) matching.

A graph $G=(V,E)$ or $G=(V,E,w)$ is \emph{bipartite} if there is a bi-partition of its nodes, $V=L\cup R$, such that $E\subseteq L\times R$. 
A \emph{semi-matching} in a bipartite graph is a set $M\subseteq E$ of edges that intersects every node in $L$ exactly once. 
Semi-matchings in unweighted graphs were defined and discussed in~\cite{HarveyLLT06}, which presents a cost measure for them, aiming at capturing quantities related to task allocation problems.
In Section~\ref{subsec: semi matchings def} we discuss an extension of this definition to weighted graphs~\cite{BrunoCS74}.

\paragraph{Minimum weight spanning tree}
A spanning tree in a graph is a set $T\subseteq E$ of edges that intersects every node in $G$ at least once, and contains no cycles (a cycle is a non-trivial path from a node to itself that does not repeat any other node).
The weight of a tree $T$ in a weighted graph is the sum of its edge weights: $w(T)=\sum_{e\in T} w(e)$.
In the minimum weight spanning tree problem (MST), our goal is to find a spanning tree of minimum weight.

\paragraph*{Approximation algorithms}
For the shortest $(s,t)$-path problem, we consider an approximation algorithm with one sided error. 
That is, if a shortest $(s,t)$-path has length $\ell$, then the $(5/3-\delta)$-approximation algorithm we consider returns a path of length $\hat{\ell}$ satisfying $\ell \leq \hat{\ell} \leq (5/3-\delta)\ell$.
Specifically, this approximated value $\hat{\ell}$ allows us to distinguish the cases $\ell=3$ (in which case $\hat\ell<5$) and $\ell=5$ (in which case~$\hat\ell\geq5$).

\subsection{The \omv{} conjecture} 
We use the popular \omv{} conjecture as a condition for some of our lower bounds. 
In fact, we will not use it as is, but a related conjecture, called the \oumv{} conjecture.
We start by defining the \oumv{} problem.

\begin{definition}
	In the \emph{Online Boolean Vector-Matrix-Vector Multiplication (\oumv{})} problem,
	an algorithm is given an integer $n$ and an $n\times n$ Boolean matrix $M$. 
	Then, for $n$ rounds numbered $i=0,\ldots,n-1$, the algorithm is given a pair of $n$-dimensional Boolean column vectors $(u_i,v_i)$ and has to compute $u_i^\intercal Mv_i$ and output the resulting Boolean value before it can proceed to the next round. 
\end{definition}

It is conjectured~\cite{HenzingerKNS15} that the \oumv{} problem has no truly subcubic-time algorithm, as stated next.
This conjecture is implied by another, popular conjecture---the \omv{} conjecture.
Both conjectures implicitly assume there are $\Theta(n^2)$ non-zero entries in the matrix $M$.

\begin{conjecture}[The \oumv{} conjecture]
For any constant $\epsilon>0$, there is no $O(n^{3-\epsilon})$-time algorithm that solves \oumv{} with an error probability of at most~$1/3$.
\end{conjecture}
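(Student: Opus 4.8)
This statement is a \emph{conjecture}---an unproven complexity-theoretic hardness assumption rather than a provable claim---so there is no proof to sketch in the usual sense. Indeed, it asserts an unconditional near-cubic lower bound on the total running time of any \oumv{} algorithm, and establishing such an unconditional lower bound for a problem that is itself solvable in polynomial time (here in the trivial $O(n^3)$) is well beyond the reach of current techniques: it would be a landmark separation result in the general word-RAM model, where super-linear lower bounds for explicit polynomial-time problems are not known. The plan, therefore, is not to prove the conjecture but to explain the evidence that motivates adopting it as an assumption, and to indicate precisely why a genuine proof is out of reach.

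The supporting evidence is twofold. First, no truly subcubic algorithm for \oumv{} is known despite considerable effort: one can answer each of the $n$ rounds by computing $Mv_i$ (or $u_i^\intercal M$) in $O(n^2)$ time, for a total of $O(n^3)$, and the only known improvements shave subpolynomial factors. Second, and more importantly, the \oumv{} conjecture follows from the more fundamental \omv{} conjecture of \cite{HenzingerKNS15}, which posits that no $O(n^{3-\epsilon})$-time algorithm multiplies a fixed Boolean $n\times n$ matrix by $n$ online Boolean vectors. That conjecture is itself unproven but widely believed, in part because the best known algorithm runs in time only $n^3/2^{\Omega(\sqrt{\log n})}$, i.e.\ a mere subpolynomial speedup over the trivial cubic bound, matching the behavior expected of a genuinely hard combinatorial problem.

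The reason a proof cannot be supplied is structural rather than technical: a rigorous $n^{3-o(1)}$ lower bound for a polynomial-time-solvable problem such as \oumv{} would resolve a major open problem in fine-grained complexity. All one can do rigorously is \emph{derive} the conjecture from other believed-hard assumptions, which is exactly the role the \omv{} conjecture plays; the conjecture is then used as a black box in the conditional lower bounds of Theorems~\ref{thm: shortest st path - lb}, \ref{thm: mx st flow with only weight increases - lb}, and \ref{thm: mwm with only weight increases}. The ``hard part'' is thus the conjecture itself, and the honest statement of the plan is that it is \emph{assumed}, not proved: its credibility rests on the failure of sustained algorithmic effort and on its derivability from the \omv{} conjecture.
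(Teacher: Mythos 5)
You are correct: the statement is a conjecture that the paper itself does not (and cannot) prove---it is cited from the \omv{} literature, noted to be implied by the \omv{} conjecture, and used purely as a hardness assumption for the conditional lower bounds. Your account of its status and motivating evidence matches the paper's treatment exactly, so there is nothing to compare beyond that.
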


\section{Shortest $(s,t)$-paths}
\label{sec: sp}

The computation of shortest paths given the link weights,
is a most basic task in communication networks.
We will hence 
start our investigation of weight-dynamic algorithms with this use case.

We focus on maintaining the shortest $(s,t)$-path length
in an undirected graph.
Any algorithm that maintains distances between all pairs of nodes (all pairs shortest paths, or APSP) or between a source and all other nodes (single-source shortest paths, or SSSP) can also be used to compute the shortest $(s,t)$-path length, so our lower bounds apply to these problems as well.

In the edge-weight change regime, shortest $(s,t)$-path problem has a dynamic algorithm with $O(\sqrt{m}\log n)$ update time and constant query time for the path length, and $O(\ell)$ query time for the path itself, where $\ell$ is the length of the path~\cite{FrigioniMN00}. 
This problem does not have a truly sub-linear (in $n$) amortized time algorithm if edge insertions and deletions are both allowed~\cite{HenzingerKNS15}.
Here, we extend this lower bound construction to the case where only edge weight changes are allowed ---  note that the lower bound is on a graph with $m=\Theta(n^2)$.

\begin{theorem}
	\label{thm: shortest st path - lb}
	For any constants $\epsilon,\delta>0$,
	there is no dynamic algorithm maintaining a
	$(5/3-\delta)$-approximation for the shortest $(s,t)$-path problem with 
	$O(n^{1-\epsilon})$ amortized time per edge-weight change and
	$O(n^{2-\epsilon})$ amortized time per query,
	unless the \omv{} conjecture is false.
\end{theorem}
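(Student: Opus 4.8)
The plan is to reduce the \oumv{} problem to the weight-dynamic shortest $(s,t)$-path problem, so that a hypothetical fast dynamic algorithm would solve \oumv{} in truly subcubic time. Given the fixed $n \times n$ Boolean matrix $M$, I would build a fixed tripartite topology on the node set $\{s\} \cup L \cup R \cup \{t\}$, where $L = \{l_1, \ldots, l_n\}$ and $R = \{r_1, \ldots, r_n\}$. The edge set consists of all edges $(s, l_a)$, all edges $(l_a, r_b)$, and all edges $(r_b, t)$; in particular the graph has $\Theta(n^2)$ edges, matching the regime $m = \Theta(n^2)$ in which the lower bound is claimed. The matrix is encoded once and for all in the middle layer: set $w(l_a, r_b) = 1$ whenever $M_{ab} = 1$ and $w(l_a, r_b) = 3$ whenever $M_{ab} = 0$. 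These middle weights are never touched again.

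Next I would simulate the $n$ rounds of \oumv{}. In round $i$, given $(u_i, v_i)$, I set $w(s, l_a) = 1$ if $u_i[a] = 1$ and $w(s, l_a) = 3$ otherwise, and likewise $w(r_b, t) = 1$ if $v_i[b] = 1$ and $3$ otherwise; this costs $2n$ edge-weight changes. Every $(s,t)$-path must traverse the layers in the order $s \to L \to R \to t$, so the shortest such path uses exactly three edges $s \to l_a \to r_b \to t$, and since path lengths in edges are always odd there is no four-edge path to worry about. With weights drawn from $\{1,3\}$, a three-edge path has weight $3$ exactly when all three of its edges are light, i.e.\ when $u_i[a] = M_{ab} = v_i[b] = 1$ for the chosen $a,b$; otherwise every three-edge path contains at least one heavy edge and thus has weight at least $5$, and any path with five or more edges has weight at least $5$ as well. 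Hence the shortest-path length is $3$ if $u_i^\intercal M v_i = 1$ and at least $5$ if $u_i^\intercal M v_i = 0$. A $(5/3-\delta)$-approximation returns $\hat{\ell} \leq (5/3-\delta)\cdot 3 = 5 - 3\delta < 5$ in the first case and $\hat{\ell} \geq \ell \geq 5$ in the second, so comparing $\hat\ell$ against the threshold $5$ recovers $u_i^\intercal M v_i$ exactly --- precisely the gap the approximation guarantee was tailored to in the preliminaries.

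For the accounting, each of the $n$ rounds performs $2n = O(n)$ weight changes followed by a single query, for a total of $O(n^2)$ updates and $O(n)$ queries, on top of an $O(n^2)$ one-time preprocessing. An algorithm with $O(n^{1-\epsilon})$ amortized update time and $O(n^{2-\epsilon})$ amortized query time would therefore solve \oumv{} in $O(n^2 \cdot n^{1-\epsilon}) + O(n \cdot n^{2-\epsilon}) + O(n^2) = O(n^{3-\epsilon})$ total time, contradicting the \oumv{} (hence \omv{}) conjecture. Two observations finish the argument: first, every per-round weight change is between the values $1$ and $3$, so the bounded-change model with additive constant $c = 2$ is respected (and the same reduction clearly survives multiplicative changes); second, for a randomized algorithm with per-query error at most $1/3$, running $O(\log n)$ independent copies and taking the majority answer on each query drives the total error below $1/3$ by a union bound over the $n$ queries, at the cost of only a $\log n$ factor that is absorbed into a slightly smaller exponent $\epsilon' > 0$.

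I expect the main obstacle to be the careful calibration of the distance gap rather than the reduction skeleton, which is standard. Concretely, I must argue that \emph{no} path --- including paths that wander back and forth between $L$ and $R$ --- can beat length $5$ when $u_i^\intercal M v_i = 0$, and that the integer weight choice $\{1,3\}$ leaves no achievable path length strictly between $3$ and $5$; this is what makes the threshold test against $5$ robust and yields hardness for every approximation factor strictly below $5/3$. The remaining care is bookkeeping: confirming that the additive-constant restriction is met by every transition between consecutive rounds, and verifying that the probability amplification does not inflate the running time past $n^{3-\epsilon'}$.
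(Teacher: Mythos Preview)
Your proposal is correct and follows essentially the same approach as the paper: encode $M$ in the weights of a bipartite middle layer, encode the round's vectors $(u_i,v_i)$ in the weights of the $O(n)$ edges incident to $s$ and $t$ using values in $\{1,3\}$, and exploit the $3$-versus-$5$ distance gap to read off $u_i^\intercal M v_i$ from any $(5/3-\delta)$-approximate answer. The paper places $s$ and $t$ inside a single complete bipartite graph (putting $t$ on the $A$-side and $s$ on the $B$-side), whereas you use a layered graph with $s$ and $t$ as separate endpoints; this is a cosmetic difference, and your variant in fact sidesteps having to assign a weight to a direct $(s,t)$ edge.
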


Given a Boolean $n\times n$ matrix $M$ and two Boolean $n$-dimensional vectors $u,v$, we define a weighted graph $G_{uMv}$ as follows.
The graph $G_{uMv}$ is a full bipartite graph on $V=(A\cup\{t\})\cup (B\cup\{s\})$, with $A=\{a_1,\ldots,a_n\}, B=\{b_1,\ldots,b_n\}$. 
The weights of the edges in $A\times B$ are $w(a_i,b_j)=3-2M[i,j]$ for every $(i,j)\in[n]^2$.
The weights of the other edges are $w(s,a_i)=3-2u[i]$ and $w(t,b_i)=3-2v[i]$ for every $i\in[n]$.
The following claim connects the \oumv{} problem with the shortest $(s,t)$-path problem.

\begin{claim}
	\label{claim: umv to shortest st paths}
	If $u^\intercal Mv=1$ then the shortest $(s,t)$-path in $G_{uMv}$ has weight $3$, and otherwise at least~$5$.
\end{claim}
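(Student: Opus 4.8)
The plan is to exploit the layered bipartite structure of $G_{uMv}$: I would first pin down exactly which edges exist, then argue that only the three-edge paths can possibly have weight below $5$, and finally read off the two cases from a single weight computation. Since the bipartition places $A$ and $t$ on one side and $B$ and $s$ on the other, the only edges present are those of $A\times B$, those from $s$ to $A$, and those from $t$ to $B$ (in particular no $s$--$t$ edge is weighted, so it is irrelevant to shortest paths). Consequently $s$ is adjacent only to vertices of $A$ and $t$ only to vertices of $B$, so every $(s,t)$-path has the form $s,a_{i_1},b_{j_1},a_{i_2},\ldots,b_{j_k},t$; it uses an odd number of edges, and hence at least three.

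Next I would compute the weight of a generic three-edge path $s\to a_i\to b_j\to t$. By the definition of the weights,
\[
w(s,a_i)+w(a_i,b_j)+w(b_j,t)=(3-2u[i])+(3-2M[i,j])+(3-2v[j])=9-2\bigl(u[i]+M[i,j]+v[j]\bigr).
\]
As each of $u[i]$, $M[i,j]$, $v[j]$ is Boolean, the bracketed sum lies in $\{0,1,2,3\}$, so this weight lies in $\{3,5,7,9\}$; moreover it equals $3$ exactly when $u[i]=M[i,j]=v[j]=1$, and is at least $5$ otherwise.

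Finally I would combine the two halves, using the elementary bound that, since every edge weight is either $1$ or $3$ and hence at least $1$, any path has weight at least its number of edges. If $u^\intercal Mv=1$ there is a pair $(i,j)$ with $u[i]=M[i,j]=v[j]=1$, giving a three-edge path of weight $3$; as no path can have weight below its length $\geq 3$, this path is optimal and the shortest weight is exactly $3$. If instead $u^\intercal Mv=0$, then every three-edge path has weight at least $5$ by the computation above, while every longer $(s,t)$-path has at least five edges and therefore weight at least $5$ as well, so the shortest $(s,t)$-path has weight at least $5$.

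The main thing to get right is the second case: one must rule out that some longer path sneaks in below weight $5$. I expect this to be handled cleanly by the ``weight $\geq$ number of edges'' bound together with the parity observation that path lengths are odd and length $1$ is impossible, so the admissible lengths are $3,5,7,\ldots$ and the first one exceeding $3$ already forces weight $\geq 5$. The remaining care is purely bookkeeping in the three-edge computation, which is the crux of the equivalence.
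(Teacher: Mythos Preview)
Your proof is correct and follows essentially the same strategy as the paper's: exhibit a weight-$3$ path when $u^\intercal Mv=1$, and in the other direction argue that every $(s,t)$-path has weight at least~$5$. The only cosmetic difference is how the second case is sliced---you split on path length (three edges versus at least five, via parity and the bound ``weight $\geq$ number of edges'') and handle three-edge paths with the closed form $9-2(u[i]+M[i,j]+v[j])$, whereas the paper instead cases on whether one of the two terminal edges $(s,a_i)$, $(b_j,t)$ already has weight~$3$.
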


\begin{proof}
	If $u^\intercal Mv=1$ then there exists $(i,j)\in[n]^2$ such that $u[i]=M[i,j]=v[j]=1$. In this case, the path $(s,a_i,b_j,t)$ has weight $3$.
	
	On the other hand, if $u^\intercal Mv=0$ then for each $(i,j)\in[n]^2$ either $u[i]=0, M[i,j]=0$, or $v[j]=0$.
	Consider a shortest $(s,t)$-path:
	It must start with an edge $(s,a_i)$ and end with an edge $(b_j,t)$ for some $(i,j)\in[n]^2$, and contain at least one other edge, in $A\times B$.
	The edges $(s,a_i)$ and $(b_j,t)$ have either weight~$1$ or~$3$. If either has weight $3$, then the path has weight at least $5$.
	Otherwise, we conclude that $u[i]=v[j]=1$, which implies $M[i,j]=0$, and thus $w(a_i,b_j)=3$.
	Since the graph is bipartite, any path from $a_i$ to $b_j$ must either use the edge $(a_i,b_j)$, or have length at least $3$; in both cases, this path has weight at least $3$, and the full $(s,t)$-path has weight at least~$5$.
\end{proof}

Using this claim, we can now prove Theorem~\ref{thm: shortest st path - lb}.

\begin{proof}[Proof of Theorem~\ref{thm: shortest st path - lb}]
	Consider an optimal dynamic algorithm for the shortest $(s,t)$-path problem.
	Using this algorithm we process pairs of vectors $u,v$ arriving as inputs to the \oumv{} problem with matrix $M$ as follows.
	
	First, build the graph $G_{uMv}$ for $u,v$ the all-$0$-vectors, and execute the initialization phase of the $s$-$t$-shortest paths algorithm.
	Given a pair $u,v$ of vectors arriving online for the \oumv{} problem, update the edge weights in the graph $G_{uMv}$. Note that the change is only in  edges touching $s$ or $t$, so $O(n)$ edges are changed, and only by an additive factor of $2$.
	
	Now, execute the algorithm on these updates, which takes time $T$. 
	If the algorithm returns a shortest path of length strictly less than $5$, we use Claim~\ref{claim: umv to shortest st paths} to conclude that $u^\intercal Mv=1$, and otherwise, $u^\intercal Mv=0$.
	Thus, we have solved the \oumv{} problem by performing $O(n)$ changes and a single query for the simulation of each round, and a total of $O(n^2)$ changes and $n$ queries. 
	If the amortized update time of each change is $O(n^{1-\epsilon})$ and of each query is $(n^{2-\epsilon})$, then the total running time would be $O(n^{3-\epsilon})$, contradicting the \oumv{} conjecture.
\end{proof}

\section{Maximum $(s,t)$-flow}
\label{sec: max flow}

Maximum flow (or throughput) problems are among the most studied
network optimization problems, and we hence consider
weight-dynamic algorithms for them.

\subsection{Maximum $(s,t)$-flow algorithms}
\label{subsec: max flow alg}

\begin{theorem}
	\label{thm: mx st flow with only weight increases - alg}
	There is a dynamic algorithm for maintaining the maximum $(s,t)$-flow 
	in directed or undirected graphs with edge weight increases and decreases, 
	with constant additive changes, 
	in $O(m)$ time per operation.
\end{theorem}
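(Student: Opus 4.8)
The plan is to maintain a maximum flow $f$ together with its value, and to show that after a single constant-additive weight change on one edge, we can repair $f$ into a new maximum flow in $O(m)$ time. The key observation is that changing a single edge capacity by a constant $c$ can change the max-flow value by at most $c$ (by the max-flow/min-cut theorem, since the capacity of any cut changes by at most $c$ when one edge's weight changes by $c$, and the min-cut value changes by at most the capacity change). Since $c$ is a fixed constant, this means only $O(1)$ units of flow need to be added or removed to restore optimality.

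First I would handle a weight \emph{increase} on edge $e$ by $c$. The current flow $f$ is still feasible in the new graph (capacities only grew), so it remains a valid flow; it may no longer be maximum. To restore maximality I run augmenting-path search in the residual graph of $f$: each augmenting path found increases the flow value by at least one integral unit, and since the value can increase by at most $c=O(1)$, at most $O(1)$ augmenting paths exist. Each augmenting-path search (e.g.\ BFS/DFS in the residual graph) costs $O(m)$, so the total repair cost is $O(c \cdot m) = O(m)$.

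Next I would handle a weight \emph{decrease} on edge $e=(x,y)$ by $c$, which is the more delicate direction because the current flow may now \emph{violate} the capacity constraint, i.e.\ $f(e)$ could exceed the reduced capacity $w(e)-c$. The fix is to first restore feasibility by cancelling the excess flow on $e$: reduce $f(e)$ down to the new capacity, pushing back the excess $\Delta \le c = O(1)$ units. This push-back creates an excess at $y$ and a deficit at $x$, which I would resolve by finding, for each of the $O(1)$ excess units, a path in the residual graph from $y$ back to $s$ (or equivalently cancelling flow along a path) and from $t$ to $x$, rerouting or removing the violating flow so that conservation is restored and $f$ is again a feasible flow of value at least $\text{(old value)} - c$. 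Once feasibility is re-established, I again run $O(1)$ augmenting-path searches as above to recover maximality, for a total of $O(m)$ time.

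The main obstacle will be the decrease case: carefully arguing that rerouting the $O(1)$ excess units of flow after cancelling it on $e$ can always be done along residual paths and costs only $O(m)$, and that the resulting flow is genuinely feasible and within $O(1)$ of the new optimum so that the subsequent augmentation phase terminates after $O(1)$ steps. I would lean on integrality of the flow (all capacities and hence the max flow are integral), the max-flow/min-cut bound on how much the optimum can move under a constant capacity change, and the standard fact that a flow differing from an optimum by $k$ units can be transformed into an optimum with $k$ augmenting-path computations. Finally I would remark that the $O(m)$ bound holds per constant-additive change, giving the stated update time, and that directed and undirected graphs are handled identically since undirected edges are modeled by antiparallel directed pairs in the residual graph.
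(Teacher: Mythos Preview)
Your approach is essentially the paper's: use the max-flow/min-cut bound to argue the optimum shifts by at most $c=O(1)$, then repair with $O(1)$ augmenting-path or flow-cancellation steps of cost $O(m)$ each. One small slip in the decrease case: reducing $f(x,y)$ creates the excess at $x$ (less flow leaves) and the deficit at $y$ (less flow arrives), so the cancellation paths you need are flow paths $s\to x$ and $y\to t$---this is exactly how the paper phrases it---not the reversed ones you wrote.
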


%

When considering both edge insertions and deletions, no non-trivial algorithm for maintaining max $(s,t)$-flow is known.
The best strategy for such cases is to execute the state of the art algorithm for static graphs, requiring $O(m^{10/7}W^{1/7})$ rounds to complete~\cite{Madry16}.
Our algorithm constitutes a significant improvement over this strategy.

\begin{proof}
	Let us first consider an edge-weight (capacity) change by a single unit. 
	
	We start with the case of a weight \emph{decrease}, for some edge $(u,v)$.
	If this edge was not saturated before the weight decrease, no change is required.
	Otherwise, if the graph is undirected assume w.l.o.g. that the flow on this edge goes from $u$ to $v$, and if it is directed, similarly assume the edge is directed from $u$ to $v$.
	Consider the directed graph composed of all the flows.
	In this graph, we look for a directed path from $s$ to $u$, and for a directed path from $v$ to $t$; both paths must exist since the flow is valid.
	We then decrease the flow on these two paths by $1$ unit, as well as the flow on the edge $(u,v)$ itself.
	Finally, search for an $(s,t)$-path in the residual graph, and if it is found, increase the flow on it by one unit.
	
	Similarly, consider a weight \emph{increase} on an edge $(u,v)$.
	If the graph is directed, assume w.l.o.g. that the edge is directed from $u$ to $v$. 
	In the residual graph, search for two paths: from $s$ to $u$ and from $v$ to $t$. If these are found, increase the flow on them and on $(u,v)$ by $1$ unit.
	If the graph is undirected, apply the procedure twice, with the directed edges~$(u,v)$ and~$(v,u)$.
	
	Each of the above procedures takes $O(m)$ time.
	If the weight change is by more than one unit, we just repeat the same procedure for each unit change, which still takes $O(m)$ time, as claimed.
\end{proof}

\subsection{Lower bounds for maximum $(s,t)$-flow}
\label{subsec: max flow lb}

Dahlgaard~\cite{Dahlgaard16} has proved an $\Omega(n^{1-\epsilon})$ lower bound on the amortized update time of the maximum cardinality bipartite matching problem, conditioned on the \omv{} conjecture.
From this, he derives a lower bound for the maximum $(s,t)$-flow problem using standard techniques.

Dahlgaard's lower bound is proved using a construction of a sequence of specifically crafted bipartite graphs. 
Here, we adapt this lower bound to our setting, by using the following result. This is~\cite[Lemma~1]{Dahlgaard16}, along with an observation regarding the number of changes, which is implicit in the graph construction therein.

\begin{lemma}[\cite{Dahlgaard16}]
	\label{lem: oumv to mcm from Dahlgaard16}
	Given an instance $M, (u_i,v_i)_{0\leq i\leq n-1}$ of the \oumv{} conjecture, it is possible to construct a sequence $(G_i)_{0\leq i\leq n-1}$ of bipartite graphs such that if $u_i^\intercal  Mv_i=1$ then the maximum size of a matching in $G_i$ is $4n+2i+1$, and otherwise it is $4n+2i$. 
	In addition, all the graphs have the same bi-partition, with $6n$ nodes in each side, and the difference between $G_i$ and $G_{i+1}$ is in the addition of $O(n)$ edges.
\end{lemma}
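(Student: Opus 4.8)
The plan is to reconstruct the reduction behind Dahlgaard's lemma and then simply read off the two quantitative assertions: the matching sizes $4n+2i(+1)$ and the $O(n)$ per-round edge count. I would begin with a \emph{static} core gadget that encodes the matrix $M$. Place row vertices $a_1,\dots,a_n$ on the left and column vertices $b_1,\dots,b_n$ on the right, and join $a_k$ to $b_\ell$ exactly when $M[k,\ell]=1$. This core biadjacency graph has the crucial property that $u^\intercal M v=1$ iff some core edge runs between a row $a_k$ with $u[k]=1$ and a column $b_\ell$ with $v[\ell]=1$. The entire difficulty is to convert this reachability condition into a \emph{unit} change in the maximum matching size, and to do so using only \emph{edge insertions}, since the vertex set is frozen at $6n$ vertices per side.

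For a single query $(u_i,v_i)$ I would attach fresh selector terminals. Add a left terminal $x_i$ joined to every $b_\ell$ with $v_i[\ell]=1$, and a right terminal $y_i$ joined to every $a_k$ with $u_i[k]=1$; pin each terminal with a private forced edge to a dedicated padding vertex so that, \emph{independent} of the query, $x_i$ and $y_i$ are matched. This pinning is what produces the fixed $+2$ contribution of the round. A positive query, on the other hand, opens an augmenting path $x_i\to b_\ell\to a_k\to y_i$ that threads a core edge $(a_k,b_\ell)$, which exists iff $u_i^\intercal M v_i=1$; rerouting along it frees a pinned partner and matches one extra vertex, the $+1$. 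Verifying the exact sizes $4n+2i$ and $4n+2i+1$ is then a standard König/augmenting-path bookkeeping: exhibit a matching of the claimed size explicitly, and prove optimality in the $0$-case by producing a vertex cover of equal size (equivalently, by checking that Hall's deficiency on the relevant vertex set is exactly one). Counting blocks (core rows, core columns, per-round terminals, and padding) keeps the bipartition at $6n$ per side, and each round inserts only the $O(n)$ selector edges plus the two pinning edges.

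The remaining and genuinely delicate step is to make the sequence $G_0,\dots,G_{n-1}$ \emph{monotone} in its edge set while keeping the matching size a clean function of the \emph{current} round only. The naive worry is that a positive query in round $i'<i$ leaves its augmenting structure permanently in the graph, so $G_i$ would accumulate an extra $+1$ for every past positive answer and the offset would fail to be the advertised $4n+2i$. I expect this \emph{neutralization} to be the main obstacle. The fix is that when advancing from $G_{i'}$ to $G_{i'+1}$ one inserts an edge that permanently saturates round $i'$'s terminals, so that thereafter each past round contributes exactly $2$ and its core edges can no longer supply an augmentation; the challenge is to arrange this with additions only, and without disturbing the signal of the round currently being queried. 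Once the invariant ``in $G_i$ the only available augmentation is the one created by query $i$'' is established, the matching-size identities follow round by round, completing the lemma.
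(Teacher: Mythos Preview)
The paper does not prove this lemma at all: it is quoted verbatim as \cite[Lemma~1]{Dahlgaard16}, with the paper merely remarking that the $O(n)$ edge-count observation ``is implicit in the graph construction therein.'' There is therefore no proof in the paper to compare your proposal against.

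As a reconstruction of Dahlgaard's argument, your sketch captures the right high-level architecture (a static core encoding $M$, per-round selector terminals whose augmenting path exists iff $u_i^\intercal M v_i=1$, and padding to fix the vertex count). However, you yourself flag the crucial step---the \emph{neutralization} that makes the sequence insertion-only while keeping the matching size a function of the current round alone---and then leave it as an expectation rather than a construction. That is precisely the content of Dahlgaard's lemma, and without it you have not actually proved the statement: you would need to exhibit the specific edges added when moving from round $i$ to $i+1$ that retire the old terminals' augmenting potential while contributing exactly $+2$ to the maximum matching, and then carry out the K\"onig/vertex-cover bookkeeping you allude to. Since the present paper treats the lemma as a black box, the right move here is simply to cite \cite{Dahlgaard16} rather than reprove it.
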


From this, we can derive the following lemma.

\begin{lemma}
\label{lem: oumv to mwm}
	Given an instance $M, (u_i,v_i)_{0\leq i\leq n-1}$ of the \oumv{} conjecture, it is possible to construct a sequence $(H_i)_{0\leq i\leq n-1}$ of weighted full bipartite graphs such that if $u_i^\intercal  Mv_i=1$ then the maximum weight of a matching in $H_i$ is $10n+2i+1$, and otherwise it is $10n+2i$.
	In addition, all the graphs have $12n$ nodes, and the difference between $H_i$ and $H_{i+1}$ is in the increase of $O(n)$ edge weights from $1$ to $2$.
\end{lemma}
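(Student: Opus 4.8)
The plan is to turn the edge-insertion construction of Lemma~\ref{lem: oumv to mcm from Dahlgaard16} into a weight-increase construction by working over the \emph{complete} bipartite graph on the same vertex set and encoding the presence or absence of an edge by its weight. Concretely, let $K$ be the complete bipartite graph on the $6n+6n$ vertices used by the sequence $(G_i)$, and define $H_i$ to be $K$ equipped with the weight function that assigns weight $2$ to every edge present in $G_i$ and weight $1$ to every other edge. Since all $G_i$ share the same bipartition with $6n$ nodes per side, every $H_i$ is a full bipartite graph on $12n$ nodes, as required. Moving from $G_i$ to $G_{i+1}$ adds $O(n)$ edges; in the weighted picture this is exactly raising the weight of those $O(n)$ edges from $1$ to $2$, which gives the claimed transition between $H_i$ and $H_{i+1}$.

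The heart of the argument is the identity that the maximum weight of a matching in $H_i$ equals $6n + s_i$, where $s_i$ denotes the maximum size of a matching in $G_i$. For the lower bound I would take a maximum matching of $G_i$, view it as $s_i$ weight-$2$ edges of $H_i$, and extend it to a perfect matching of $K$ by pairing up the $6n - s_i$ still-unmatched vertices on each side arbitrarily; this is always possible because $K$ is complete and $s_i < 6n$. The resulting matching has weight $2 s_i + (6n - s_i) = 6n + s_i$. For the matching upper bound, I would take any matching $M$ of $H_i$, let $k$ be the number of its weight-$2$ edges, and write its weight as $2k + (|M| - k) = |M| + k \le 6n + k$; since the $k$ weight-$2$ edges form a matching in $G_i$ we have $k \le s_i$, so the weight is at most $6n + s_i$.

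Combining the two bounds gives $\max\text{-weight}(H_i) = 6n + s_i$. Plugging in the two cases from Lemma~\ref{lem: oumv to mcm from Dahlgaard16}, namely $s_i = 4n + 2i + 1$ when $u_i^\intercal M v_i = 1$ and $s_i = 4n + 2i$ otherwise, yields the weights $10n + 2i + 1$ and $10n + 2i$ stated in the lemma. I do not expect a serious obstacle here; the only points requiring care are verifying that a maximum matching of $G_i$ never saturates an entire side (so that the extension step is well defined), which holds since $s_i \le 4n + 2(n-1) + 1 = 6n - 1$ for $i \le n-1$, and confirming that adding weight-$1$ edges can only increase the objective, which is made rigorous by the clean decomposition $2k + (|M|-k) = |M| + k$ used above.
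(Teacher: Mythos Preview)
Your proposal is correct and follows essentially the same approach as the paper: build the complete bipartite graph on the same $6n+6n$ vertices, give weight $2$ to the edges present in $G_i$ and weight $1$ to the rest, extend a maximum matching of $G_i$ to a perfect matching, and argue optimality via the decomposition $|M|+k\le 6n+s_i$. Your write-up is in fact slightly more careful than the paper's (you make the upper-bound inequality explicit), though your worry about $s_i<6n$ is unnecessary---if the matching already saturated a side it would be perfect and no extension would be needed.
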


\begin{proof}[Proof of Lemma~\ref{lem: oumv to mwm}]
	As the graphs from Lemma~\ref{lem: oumv to mcm from Dahlgaard16} do not differ in their set of nodes or bi-partition, we can consider a full bipartite graph $H$ on the same set of nodes and bi-partition, with all edge weights $1$.
	Let $H_i$ be the graph $H$, with the weight of edges appearing in $G_i$ increased by $1$.
	Note that the matching $M_i$ from $G_i$ also appears in $H_i$, and its weight $w_{H_i}(M_i)$ is doubled, i.e., $w_{H_i}(M_i)=2|M_i|$.
	In addition, this matching matches $2|M_i|$ nodes, and we can also match all the other nodes, since the graph is full bipartite, getting a matching $M'_i$ of total weight
	$W_{H_i}(M'_i)= 2|M_i| + 6n - |M_i|= 6n + |M_i|$.
	
	Finally, note that no matching in $H_i$ can contain more weight-$2$ edges (since $M_i$ was of maximum size in $G_i$), or more edges in total (since $M'_i$ matches all the nodes). Hence, $M'_i$ is a maximum weight matching in~$H_i$.
\end{proof}

Assume there is a dynamic algorithm that solves the MWM problem on a sequence of graphs with edge weight increases, in $O(n^{1-\epsilon})$ amortized time per weight increase and  $O(n^{2-\epsilon})$ amortized time per query. 
The above lemma shows that, given an instance for the \oumv{} conjecture of length $n$, we can construct the sequence $(H_i)_{0\leq i\leq n-1}$ of graphs, and use the alleged algorithm to solve the instance of the \oumv{} conjecture.
In addition, the graphs have $O(n)$ nodes and $O(n^2)$ edges.
So this gives an algorithm solving the \oumv{} problem in $O(n^{2-\epsilon})$ time per vector-pair, refuting the conjecture. 
As the \omv{} conjecture implies the \oumv{} conjecture, we get the following theorem.

\begin{theorem}
	\label{thm: mwm with only weight increases}
	There is no dynamic algorithm maintaining a maximum weight matching in
	$O(n^{1-\epsilon})$ amortized time per weight increase and
	$O(n^{2-\epsilon})$ amortized time per query,
	for any $\epsilon>0$,
	unless the \omv{} conjecture is false.
\end{theorem}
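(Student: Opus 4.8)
The final statement to prove is Theorem~\ref{thm: mwm with only weight increases}, the conditional lower bound for maximum weight matching. Conveniently, nearly all the machinery is already in place: Lemma~\ref{lem: oumv to mwm} gives a reduction from an \oumv{} instance to a sequence of weighted full bipartite graphs $(H_i)$ in which the maximum weight matching differs by exactly one unit ($10n+2i+1$ versus $10n+2i$) depending on whether $u_i^\intercal M v_i = 1$. The plan is therefore to complete the standard simulation argument that is already sketched in the paragraph immediately preceding the theorem statement, and wrap it as a formal proof.

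First I would assume, for contradiction, that a dynamic MWM algorithm exists with $O(n^{1-\epsilon})$ amortized update time and $O(n^{2-\epsilon})$ amortized query time for some $\epsilon>0$. Given an \oumv{} instance of length $n$ with matrix $M$ and vector-pairs $(u_i,v_i)_{0\leq i\leq n-1}$, I would invoke Lemma~\ref{lem: oumv to mwm} to obtain the graph sequence $(H_i)$. I would initialize the dynamic algorithm on $H_0$, then process the rounds in order: to advance from $H_i$ to $H_{i+1}$, apply the $O(n)$ edge-weight increases (each from $1$ to $2$) guaranteed by the lemma, feeding them one at a time to the update routine; then issue a single query to read off the maximum weight matching value. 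Since the two possible values $10n+2i$ and $10n+2i+1$ differ by one and are fully determined by the round index $i$ (which the simulator knows), the returned value immediately reveals whether $u_i^\intercal M v_i=1$, so I output the corresponding Boolean before proceeding.

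For the running-time accounting I would tally the total work across all $n$ rounds. The total number of weight-increase updates is $O(n)$ per round times $n$ rounds, i.e.\ $O(n^2)$ updates, each costing $O(n^{1-\epsilon})$ amortized, for $O(n^{3-\epsilon})$ total update time; and there are $n$ queries at $O(n^{2-\epsilon})$ each, contributing $O(n^{3-\epsilon})$. The construction of the $H_i$ themselves is polynomial and absorbed into lower-order terms. Hence the entire \oumv{} instance is solved in $O(n^{3-\epsilon})$ time, contradicting the \oumv{} conjecture. Since the \omv{} conjecture implies the \oumv{} conjecture, this contradicts the \omv{} conjecture as well, yielding the theorem.

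There is essentially no hard mathematical obstacle here, since Lemma~\ref{lem: oumv to mwm} has already done the combinatorial heavy lifting of engineering a unit gap in the matching weight; the only points requiring a modicum of care are bookkeeping ones. I would make sure the amortized bounds are applied correctly (amortization is exactly what licenses summing a worst-case-looking per-operation cost over the whole sequence), confirm that the graph sizes are $\Theta(n)$ nodes and $\Theta(n^2)$ edges so that the \oumv{} density assumption of $\Theta(n^2)$ nonzero matrix entries is respected, and note that the argument is agnostic to determinism versus randomization: a randomized dynamic algorithm with error probability at most $1/3$ yields an \oumv{} algorithm with the same error bound, which the conjecture also forbids.
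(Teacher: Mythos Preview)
Your proposal is correct and follows essentially the same approach as the paper: invoke Lemma~\ref{lem: oumv to mwm} to obtain the graph sequence $(H_i)$, simulate the hypothetical dynamic MWM algorithm through the $O(n^2)$ total weight increases and $n$ queries, and conclude that the resulting $O(n^{3-\epsilon})$-time \oumv{} algorithm contradicts the conjecture. Your write-up is in fact more explicit than the paper's (which compresses the argument into the paragraph preceding the theorem), spelling out the per-round accounting and the passage from \oumv{} to \omv{}.
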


As $b$-matchings constitute a specific case of matchings, the theorem immediately applies to them as well.

A standard textbook reduction, as the one mentioned in~\cite{Dahlgaard16} (see, e.g.,~\cite{CormanLRS_09}), gives a similar bound for the max $(s,t)$-flow, as follows.

\begin{theorem}
	\label{thm: mx st flow with only weight increases - lb}
	There is no dynamic algorithm for maintaining a maximum $(s,t)$-flow 
	in a directed graph with edge weight increases in
	$O(n^{1-\epsilon})$ amortized time per weight increase and
	$O(n^{2-\epsilon})$ amortized time per query,
	for any $\epsilon>0$
	unless the \omv{} conjecture is false.
\end{theorem}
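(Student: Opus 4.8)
The plan is to reduce the \oumv{} problem to dynamic max $(s,t)$-flow by composing the construction behind Theorem~\ref{thm: mwm with only weight increases} with the classical textbook reduction from bipartite matching to maximum flow. Concretely, I would start from the sequence $(H_i)$ of weighted full bipartite graphs produced by Lemma~\ref{lem: oumv to mwm}, in which the maximum-weight matching equals $10n+2i+1$ exactly when $u_i^\intercal M v_i = 1$ and $10n+2i$ otherwise, and in which consecutive instances differ only by increasing $O(n)$ edge weights from $1$ to $2$. Writing $L \cup R$ for the fixed bipartition ($6n$ nodes per side), I would build a flow network $F_i$ by adding a source $s$ joined to every node of $L$, a sink $t$ joined from every node of $R$, and orienting each bipartite edge from $L$ to $R$; the unit source/sink edges enforce that any integral flow decomposes into vertex-disjoint $s$-$t$ paths, i.e.\ a matching, while the capacities of the bipartite edges are set from the current weights $w_i$.

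The heart of the argument is to choose the capacities so that the \emph{plain} max $(s,t)$-flow value in $F_i$ is an affine function of the discriminating quantity of Lemma~\ref{lem: oumv to mwm}, and so that moving from $H_i$ to $H_{i+1}$ only ever \emph{increases} capacities. The key structural fact I would exploit is that in $H_i$ every maximum-weight matching is a perfect matching whose weight is $6n$ plus the number of weight-$2$ edges it uses; since the full bipartite graph lets any matching be extended to a perfect one, this extra term is exactly the maximum \emph{cardinality} matching of the weight-$2$ subgraph (the graphs $G_i$ of Lemma~\ref{lem: oumv to mcm from Dahlgaard16}). Thus the bit we need is the size of a maximum cardinality matching among the ``activated'' edges, and this is precisely what a max $(s,t)$-flow computes once activated edges carry one unit of augmenting capacity and the inactive (weight-$1$) edges carry none.

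The step I expect to be the main obstacle is reconciling this ``inactive edges carry no augmenting flow'' requirement with the model's constraint that weights live in $\{1,\dots,W\}$ and may only increase: a capacity-$1$ edge cannot simply be switched off, so naively every edge of the full bipartite graph would admit a unit of matching flow and the flow value would collapse to the constant $6n$. I would resolve this by subdividing each bipartite edge and routing a forced unit of ``baseline'' flow through it that saturates its unit capacity, so that only an edge whose capacity has been raised to $2$ frees a unit for an actual augmenting $s$-$t$ path; activating an edge is then realized purely as a capacity \emph{increase}, which is the regime the paper already permits for its lower bounds. Orienting this gadget symmetrically keeps the instance undirected, as required by the statement.

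Finally I would do the bookkeeping. The network $F_i$ has $O(n)$ nodes and $O(n^2)$ edges, each round performs $O(n)$ capacity increases and a single flow query, and there are $n$ rounds. Hence an algorithm with $O(n^{1-\epsilon})$ amortized update time and $O(n^{2-\epsilon})$ amortized query time would solve \oumv{} in total time $O\!\left(n\cdot(n\cdot n^{1-\epsilon}+n^{2-\epsilon})\right)=O(n^{3-\epsilon})$, contradicting the \oumv{} conjecture; since the \omv{} conjecture implies the \oumv{} conjecture, Theorem~\ref{thm: mx st flow with only weight increases - lb} follows.
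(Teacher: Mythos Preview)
Your high-level plan is exactly what the paper does: it derives Theorem~\ref{thm: mx st flow with only weight increases - lb} from Theorem~\ref{thm: mwm with only weight increases} in one line, citing ``a standard textbook reduction, as the one mentioned in~\cite{Dahlgaard16} (see, e.g.,~\cite{CormanLRS_09})'', and supplies no further argument. So at the level of strategy you match the paper.

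Where you go beyond the paper is in spotting a real obstacle that the paper glosses over. The textbook reduction turns \emph{maximum-cardinality} bipartite matching into unit-capacity max $(s,t)$-flow; it says nothing about maximum-\emph{weight} matching, and if one plugs the full bipartite graphs $H_i$ of Lemma~\ref{lem: oumv to mwm} into it with unit $s$/$t$ edges, the flow value is indeed the constant $6n$ regardless of which bipartite edges carry weight~$2$. The paper's one-line proof does not address this, so your diagnosis is sharper than the paper's own write-up.

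Your proposed patch---subdivide each bipartite edge and thread a unit of ``baseline'' flow through it so that only weight-$2$ edges have residual augmenting capacity---is a reasonable direction, but as stated it is still a sketch. In an \emph{undirected} network (which the theorem requires), flow can traverse a gadget in either direction, so you must check that the baseline cannot be rerouted to free up residual capacity on an inactive edge, and that the extra gadget nodes do not create unintended $s$--$t$ paths that bypass the matching constraint. You should write out the gadget explicitly (nodes, edges, capacities) and verify that the max flow equals a fixed constant plus the maximum-cardinality matching in the weight-$2$ subgraph; once that is nailed down, the bookkeeping paragraph you already have finishes the argument.
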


The same results hold for the decremental case (and of course, if both weight increases and decreases are allowed). 
To see this, consider the same graph sequences, but in the opposite order, in the same way as in~\cite{Dahlgaard16}.

Following the previous section, one may wonder if it is possible to get similar lower bounds for approximate versions of the maximum flow and maximum matching problems.
At least in one case, we know this is impossible: there exist $2$-approximation dynamic algorithms for matching, running in constant time~\cite{Solomon2016FullyDM}.
Since these dynamic algorithms can also be used in our setting, getting a non-trivial lower bound for $2$-approximation of maximum matching is impossible.
The questions of lower approximation factors of matching,
and of getting a lower bound for maximum $(s,t)$-flow not through a reduction to matching, remain open.

\section{Matchings}
\label{sec: matchings}

We next consider applications related to load-balancing,
and specifically, matchings.
In the previous section, we showed a lower bound for maximum matching, which immediately implies a lower bound for the $b$-matching problem. 
Here, we present a maximum matching algorithm, and then turn to study a related problem: the computation of semi-matchings. Semi-matchings are traditionally studied in the context of allocating tasks to machines~\cite{HarveyLLT06}, but also find applications, e.g., when assigning users
to points-of-presence~\cite{SchmidS13}.

\subsection{Maximum weight matching algorithm}
\label{subsec: mwm - alg}
The aforementioned dynamic algorithm for the maximum $(s,t)$-flow problem immediately implies a similar algorithm for the dynamic maximum weight matching problem, through the same standard reduction used for the lower bound (see e.g.~\cite{CormanLRS_09}).
This gives the following.

\begin{theorem}
	\label{thm: mwm - alg}
	There is a dynamic algorithm for maintaining a maximum weight matching in graphs with edge weight increases and decreases, 
	with constant additive changes, 
	in $O(m)$ time per operation.
\end{theorem}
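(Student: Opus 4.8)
The plan is to obtain the algorithm as a direct consequence of the maximum $(s,t)$-flow algorithm of Theorem~\ref{thm: mx st flow with only weight increases - alg}, by running that algorithm on the flow network produced by the standard matching-to-flow reduction---the same reduction underlying the flow lower bound of Theorem~\ref{thm: mx st flow with only weight increases - lb}. Concretely, for a bipartite instance $G=(L\cup R,E,w)$ I would build a network with a source $s$ joined to every $u\in L$ and a sink $t$ joined to every $v\in R$ by unit-capacity arcs, orient each matching edge $(u,v)$ from $L$ to $R$ with unit capacity, and encode its weight $w(u,v)$ as the \emph{cost} of that arc. A maximum-weight matching then corresponds to a minimum-cost (maximum) flow in this network, and the matching is read off from the integral flow on the $L\to R$ arcs; the value of the flow (with costs) yields its weight.

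The second step is to check that the reduction is change-preserving. An additive change of the weight of a single edge $e$ of $G$ by a constant $c$ becomes an additive change of the cost (or capacity) of a single arc of the network by $c$, which is exactly the bounded, constant-additive regime for which Theorem~\ref{thm: mx st flow with only weight increases - alg} is designed. Hence each matching update triggers only $O(1)$ unit changes in the flow instance, each handled by the augment/cancel primitive from the proof of Theorem~\ref{thm: mx st flow with only weight increases - alg} in $O(m)$ time, giving $O(m)$ per operation overall. Since the network has $O(n)$ nodes and $O(m)$ arcs, a query returning the matching or its weight is read directly from the maintained flow in constant time, matching the stated bounds.

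The main obstacle I anticipate is the weight dimension itself: the bare matching-to-flow reduction captures only cardinality, whereas a \emph{maximum-weight} matching corresponds to a min-cost flow, so I must argue that the augmenting primitive of Theorem~\ref{thm: mx st flow with only weight increases - alg} can be executed along \emph{minimum-cost} residual paths and that doing so preserves optimality after each unit change---equivalently, that no negative-cost residual cycle is ever created. I expect this to follow from the standard successive-shortest-path invariant for min-cost flow (augmenting only along shortest residual paths keeps the flow min-cost at each value), combined with the bounded-weight assumption of our model, which keeps residual costs small and the per-step shortest-path search within the $O(m)$ budget (e.g.\ via maintained node potentials so that reduced costs stay nonnegative). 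The remaining points are routine and I would verify them last: that the construction is taken bipartite, that integrality of the maintained flow yields an honest matching, and that the identical argument handles weight decreases as well as increases.
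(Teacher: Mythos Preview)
Your plan is exactly the paper's: its entire proof of this theorem is the single sentence that the flow algorithm of Theorem~\ref{thm: mx st flow with only weight increases - alg} ``immediately implies'' the result ``through the same standard reduction used for the lower bound.''

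You are, however, more careful than the paper, and the obstacle you flag is genuine. The textbook reduction the paper invokes turns max flow into maximum \emph{cardinality} bipartite matching; obtaining maximum \emph{weight} matching requires min-cost flow, as you correctly write, and Theorem~\ref{thm: mx st flow with only weight increases - alg} as stated and proved deals only with capacity (not cost) updates. So the paper's one-line argument has the very gap you identify and does not resolve it. Your proposed fix---maintain dual potentials and perform one shortest-residual-path augmentation per unit cost change---is the right shape, but note that a Dijkstra step with potentials costs $O(m+n\log n)$ rather than $O(m)$, so you would still owe an argument (or a slight weakening) to hit the stated bound exactly. Finally, as you note in passing, the flow reduction is bipartite-only, a restriction that neither the theorem statement nor the paper's proof makes explicit.
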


\subsection{Defining weighted semi-matchings}
\label{subsec: semi matchings def}

Consider a weighted bipartite graph $G=(L,R,E,w)$ where $E\subseteq L\times R$ is the set of edges and $w:E\to\mathbb N$ is an edge-weight function,
where no node in $L$ is isolated.
A \emph{semi-matching} is a set $M\subseteq E$ of edges in $G$ such that each node in $L$ is incident on exactly one edge of~$M$.

Following~\cite{BrunoCS74}, we study an extension of this classical definition of semi-matchings to the edge-weighted case, 
which can model task allocation (resp. PoP assignment) problems where a task may have different completion times in different machines (resp. at different sites). 
We consider a model where the \emph{cost} of a semi-matching $M$ is defined to capture the total makespan (sum of finishing times) of all the tasks (nodes in~$L$), or equivalently, the mean makespan (average finishing time).
Since the tasks are no longer identical, each machine sets the order of the tasks allocated to it, in a way that will minimize the total makespan.
Formally, we define the cost of $M$ for a node (machine) $r\in R$ as
\[
\cost_G(M,r)=
\min_{\pi:[\deg_M(r)]\to \Gamma_M(r)}
\sum_{i=1}^{\deg_M(r)} \sum_{j\leq i} w(\pi(j),r)
\]
where $\Gamma_M(r)$ is the set of neighbors of $r$ in~$M$,  $\deg_M(r)=|\Gamma_M(r)|$, and $[\deg_M(r)]=\{1,\ldots,\deg_M(r)\}$.
The total cost of a matching~$M$ is the sum of the costs of nodes in $R$, i.e., $\cost_G(M)=\sum_{r\in R}\cost_G(M,r)$.
Note that this definition coincides with the definition for unweighted graphs~\cite{HarveyLLT06}, 
by setting $w$ as the constant function $1$.

In the \emph{minimum-cost semi-matching} problem, we are given a weighted bipartite graph as above, and the goal is to find in it a semi-matching with a minimal cost.

While semi-matchings resemble $b$-matchings in structure, finding minimum cost matchings of these types constitutes a different algorithmic problem. 
One reason for this is that in a semi-matching, the effect of a matched edge on the cost is not linear, but also depends on the other edges. Another reason is that in semi-matchings, there is no predefined bound on the number of matching edges touching each node.

\subsection{Lower bounds for the semi-matchings problem}
\label{subsec: semi matchings lb}

The lower bound for the maximum weight matching problem from Theorem~\ref{thm: mwm with only weight increases} also implies a lower bound for maintaining a minimum cost semi-matching as follows.

\begin{theorem}
	\label{thm: semi-matchings - lb}
	There is no dynamic algorithm for maintaining a minimum cost semi-matching in
	$O(n^{1-\epsilon})$ amortized time per weight increase and
	$O(n^{2-\epsilon})$ amortized time per query,
	for any $\epsilon>0$,
	unless the \omv{} conjecture is false.
\end{theorem}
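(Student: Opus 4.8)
The plan is to reduce the minimum-cost semi-matching problem to the maximum weight matching problem whose lower bound was established in Theorem~\ref{thm: mwm with only weight increases}. Concretely, I would start from the sequence of weighted full bipartite graphs $(H_i)_{0\leq i\leq n-1}$ produced in Lemma~\ref{lem: oumv to mwm}, where every graph has $12n$ nodes, edge weights in $\{1,2\}$, and the maximum weight of a matching encodes the answer to the $i$-th \oumv{} query (it is $10n+2i+1$ if $u_i^\intercal Mv_i=1$ and $10n+2i$ otherwise). The goal is to show that a fast dynamic semi-matching algorithm would let us read off this maximum matching weight, and hence solve \oumv{} in truly subcubic time.

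The key technical step is to engineer a graph whose minimum semi-matching cost is a strictly decreasing function of the maximum matching weight in $H_i$, using only edge-weight increases as updates. The natural device is to take $H_i$, designate $L$ as the side to be fully matched, and attach to every node in $L$ a private ``dummy'' neighbor of its own in $R$ through a high-weight edge, so that every node in $L$ can always be semi-matched even when no $H_i$-edge is used for it. Since each $r\in R$ orders its assigned tasks to minimize total makespan, routing a task through a genuine low-weight $H_i$-edge rather than through an expensive dummy edge lowers the cost; by choosing the dummy-edge weights large enough relative to the $\{1,2\}$ weights in $H_i$, I would force the optimal semi-matching to use as many heavy ($=2$) genuine edges as possible, and among those to prefer the heavy ones. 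The arithmetic should be set up so that the optimal semi-matching cost is an explicit affine function of $|M_i|$ and of the number of weight-$2$ edges in the optimal matching, letting a single query distinguish the two \oumv{} outcomes.

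After fixing the gadget, I would carry out the reduction exactly as in the proof of Theorem~\ref{thm: mwm with only weight increases}: build the augmented graph for $u_i,v_i$ the all-zero vectors, run the initialization of the alleged semi-matching algorithm, and then for each online pair $(u_i,v_i)$ apply the $O(n)$ weight increases that turn $H_i$ into $H_{i+1}$ (recall these are increases of $O(n)$ edges from $1$ to $2$, and the dummy edges are never touched), followed by one query. The augmented graph still has $O(n)$ nodes and $O(n^2)$ edges, so an update time of $O(n^{1-\epsilon})$ together with a query time of $O(n^{2-\epsilon})$ would yield a total running time of $O(n^{3-\epsilon})$ for the whole \oumv{} instance, contradicting the conjecture.

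The main obstacle I expect is the combinatorial analysis of the semi-matching cost under the dummy-edge gadget: because the cost of a node $r\in R$ is the nonlinear prefix-sum makespan $\cost_G(M,r)$ rather than a plain sum of weights, I must verify that the optimal assignment concentrates genuine edges appropriately and that the global minimum cost is genuinely monotone in the quantity computed by $H_i$, with a clean additive gap of at least one unit between the two \oumv{} cases. In particular, I must confirm that no cheaper semi-matching can arise from spreading many genuine edges across a single $r$ (which the convex prefix-sum penalizes) in a way that blurs the gap; choosing the weight parameters of the dummy edges so that the makespan contribution of even one dummy edge dominates any saving obtainable by overloading a node with genuine edges is the crucial calibration, and it is where the proof must be argued carefully rather than by routine calculation.
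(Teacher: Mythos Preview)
Your reduction has the optimization direction backwards. In the graphs $H_i$ of Lemma~\ref{lem: oumv to mwm} the ``signal'' edges (those present in $G_i$) carry weight~$2$ and all other edges weight~$1$. A minimum-cost semi-matching, however, prefers \emph{light} edges: once you force every $L$-node onto a genuine edge via large dummy weights, and once you observe that overloading any $r\in R$ is strictly penalized by the convex prefix-sum cost, the optimal semi-matching is simply a minimum-weight perfect matching inside the genuine part. That matching uses as many weight-$1$ edges as possible, i.e.\ it \emph{avoids} the $G_i$-edges, and its cost equals $6n$ plus the minimum number of $G_i$-edges any perfect matching of the complete bipartite graph is forced to contain. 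This quantity bears no relation to the \emph{maximum} cardinality matching $|M_i|$ of $G_i$ (it may well be~$0$), so the one-unit \oumv{} gap is lost. No calibration of the dummy weights repairs this: dummies only decide \emph{whether} a genuine edge is used, never make a heavier genuine edge preferable to a lighter one in a minimization problem.

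The paper's route is both simpler and avoids the issue. It works directly from Lemma~\ref{lem: oumv to mcm from Dahlgaard16} (not Lemma~\ref{lem: oumv to mwm}) and \emph{flips} the weights: edges of $G_i$ get weight~$1$, all remaining edges of the complete bipartite graph get weight~$2$. No dummy nodes are added; since the two sides already have equal size $6n$, the optimal semi-matching is a perfect matching, and its cost is $12n-k$ where $k$ is the number of weight-$1$ edges used. Maximizing $k$ is exactly finding a maximum matching in $G_i$, so the minimum semi-matching cost equals $12n-|M_i|$ and distinguishes the two \oumv{} outcomes. Note that under this flip the per-round updates become weight \emph{decreases} from $2$ to $1$; the incremental statement is recovered by processing the sequence in reverse, as remarked after Theorem~\ref{thm: mx st flow with only weight increases - lb}. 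If you want to salvage your outline, replacing $w$ by $3-w$ before adding dummies would work, but then the dummies become superfluous and you are back to the paper's construction.
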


To prove this theorem, we use Lemma~\ref{lem: oumv to mcm from Dahlgaard16}, in a way similar to the one taken in the proof of Lemma~\ref{lem: oumv to mwm}.
Specifically, given a graph $G_i$ from Lemma~\ref{lem: oumv to mcm from Dahlgaard16}, we construct a graph $H_i$ as follows.

Let $H$ be a full bipartite graph on the same set of nodes as $G_i$ and the same bi-partition (all these graphs have the same nodes and partition), with all edge weights set to $2$.
Let $H_i$ be the graph $H$ with the weights of edges that appear in $G_i$ decreased by $1$.
A matching $M_i$ in $G_i$ is also a matching in $H_i$, with cost  $\cost_{H_i}(M_i)=|M_i|$.
In addition, this matching matches $|M_i|$ nodes from $L$, and we can also match all the rest of the $L$-nodes without using any of the nodes of $R$ twice, as the graph is full bipartite with equal-sized sides.
This matching $M_i'$ is a semi-matching, with a cost
$\cost_{H_i}(M_i')=|(M_i))| + 2(6n - |M_i|)= 12n - |M_i|$.
This gives an upper bound on the minimum weight of a semi-matching in $H_i$.
Finally, note that a semi-matching in $H_i$ cannot use less edges, and cannot use less of weight $2$ in $H_i$, so its cost cannot be smaller.
Hence, the minimum cost of a semi-matching in $H_i$ is $12n - |M_i|$, yielding the next lemma.

\begin{lemma}
	\label{lem: oumv to semi matchings}
	Given an instance $M, (u_i,v_i)_{0\leq i\leq n-1}$ of the \oumv{} conjecture, it is possible to construct a sequence $(H_i)_{0\leq i\leq n-1}$ of weighted full bipartite graphs such that if $u_i^\intercal  Mv_i=1$ then the cost of a semi-matching in $H_i$ is at least $8n-2i+1$, and otherwise it is $8n-2i$.
	In addition, all the graphs have $12n$ nodes, and the difference between $H_i$ and $H_{i+1}$ is in the decrease of $O(n)$ edge weights from $2$ to $1$.
\end{lemma}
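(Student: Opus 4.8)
The plan is to mirror the reduction behind Lemma~\ref{lem: oumv to mwm}, reusing the graphs $G_i$ of Lemma~\ref{lem: oumv to mcm from Dahlgaard16}, but now tracking the semi-matching \emph{cost} rather than the total matching weight. Concretely, I take $H_i$ to be the full bipartite graph on the $12n$ vertices of $G_i$ ($6n$ per side) with every edge of weight $2$, except that each edge present in $G_i$ is given weight $1$; since $G_{i+1}$ differs from $G_i$ by $O(n)$ added edges, $H_{i+1}$ differs from $H_i$ by lowering $O(n)$ weights from $2$ to $1$, as the statement requires. The whole lemma then reduces to a single structural claim: the minimum cost of a semi-matching in $H_i$ equals $12n-|M_i|$, where $|M_i|$ is the maximum matching size in $G_i$. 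Substituting the two possible values of $|M_i|$ from Lemma~\ref{lem: oumv to mcm from Dahlgaard16} separates the two cases of $u_i^\intercal M v_i$ by exactly one unit, which is what Theorem~\ref{thm: semi-matchings - lb} needs. The essential novelty compared to Lemma~\ref{lem: oumv to mwm} is that $\cost_{H_i}$ is a sum of \emph{makespans}, not of edge weights, so I cannot reason about the objective edge-by-edge.

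For the upper bound I extend a maximum matching $M_i$ of $G_i$: it uses $|M_i|$ weight-$1$ edges and saturates $|M_i|$ vertices of $L$, and because $H_i$ is full bipartite with $|L|=|R|=6n$, I can match the remaining $6n-|M_i|$ vertices of $L$ to distinct, still-unused vertices of $R$ via weight-$2$ edges. The resulting semi-matching loads every machine with at most one task, so each machine's makespan is just its single edge weight and $\cost_{H_i}$ equals the total edge weight $|M_i|\cdot 1+(6n-|M_i|)\cdot 2=12n-|M_i|$.

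The lower bound is the part that needs care. First, for any machine the makespan is at least the sum of the processing times assigned to it (each task is counted at least once in the nested sum defining $\cost_G(M,r)$), so for every semi-matching $S$ we have $\cost_{H_i}(S)\ge \sum_{e\in S} w(e)=12n-a$, where $a$ is the number of weight-$1$ (equivalently, $G_i$) edges used by $S$. Writing $f_r$ for the number of weight-$1$ tasks on machine $r$, a short computation (scheduling the unit tasks first) shows that the makespan excess over the weight sum is at least $\sum_r \binom{f_r}{2}$. The key counting step is then $\sum_r \binom{f_r}{2}\ge a-\rho$, where $\rho=|\{r: f_r\ge 1\}|$; and $\rho\le |M_i|$, because picking for each such $r$ one of its weight-$1$ $L$-neighbours (distinct across $r$, since every $L$-vertex carries a unique matched edge) yields a matching of size $\rho$ inside $G_i$. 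Combining, $\cost_{H_i}(S)\ge (12n-a)+(a-|M_i|)=12n-|M_i|$, matching the upper bound.

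I expect this counting chain to be the main obstacle: the tempting shortcut ``use as many weight-$1$ edges as possible'' is false as stated, since concentrating weight-$1$ edges on few machines inflates the makespan, and the whole point is that the $\binom{f_r}{2}$ excess exactly compensates for the cheaper edges beyond what a genuine matching of $G_i$ can supply. Once the structural identity $12n-|M_i|$ is in hand, the remaining bookkeeping---plugging in $|M_i|\in\{4n+2i,\,4n+2i+1\}$ and recording that $H_i$ has $12n$ vertices while consecutive graphs differ in $O(n)$ weight changes---is immediate.
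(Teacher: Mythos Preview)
Your proof is correct and follows the paper's construction exactly: the same graphs $H_i$, the same upper bound via extending a maximum matching $M_i$ of $G_i$ to a perfect semi-matching of cost $12n-|M_i|$. Where you differ is in the lower bound: the paper simply asserts that ``a semi-matching in $H_i$ cannot use less edges, and cannot use less of weight $2$,'' but as you correctly point out, a semi-matching \emph{can} use more than $|M_i|$ weight-$1$ edges by loading several onto one machine, so that assertion is not literally true. Your makespan-excess argument ($\cost\ge\sum w(e)+\sum_r\binom{f_r}{2}\ge(12n-a)+(a-\rho)$ together with $\rho\le|M_i|$ via extracting a $G_i$-matching from the loaded machines) is the right way to close this gap, and in fact makes the lower bound rigorous where the paper's one-line justification does not.
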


This lemma immediately implies Theorem~\ref{thm: semi-matchings - lb}, in a way similar to the derivation of Theorem~\ref{thm: mwm with only weight increases} from Lemma~\ref{lem: oumv to mwm}.

\section{Minimum weight spanning trees}
\label{sec: mst}

We conclude with another important application in communication networks:
the computation of spanning trees. These are used, e.g.,
on the Layer~2 of the networking stack to avoid forwarding loops, and also find applications in the context of sensor networks~\cite{huang2006dynamic}.

\subsection{Minimum weight spanning tree construction algorithms}
\label{subsec: mst alg}
Holm et al.~\cite{HolmRW15} gave an algorithm that maintains a minimum spanning tree in a fully dynamic setting, i.e., with both edge insertions and deletions, in $O(\log^4 n /\log \log n)$ amortized time per update, and constant time per MST-size query.
This algorithm can be adapted to our setting without any increase in the asymptotic running time: any edge-weight change in our model is simply translate to the deletion of the edge with its old weight, and addition of it with its new weight.

\subsection{Lower bounds for minimum weight spanning tree computation}
\label{subsec: mst lb}

In their seminal works, Pătraşcu and Demaine~\cite{PatrascuD06} gave a cell-probe lower bound for the connectivity problem in a dynamic graph.
More concretely, they consider a data structure supporting three operations: adding an edge between two nodes, removing an edge, and deciding whether the graph is connected.
In this setting, they prove that maintaining any such data structure requires  $\Omega(\log n)$ amortized complexity per operation.
The complexity measure used in their work is the number memory words accessed, which implies a similar lower bound on the amortized running time.

\begin{theorem}[{\cite[Thm.~2.4 \& \S 9.1]{PatrascuD06}}]
	\label{thm: connectivity lb PD06}
	There is no dynamic algorithm, even randomized, 
	maintaining graph connectivity  in an unweighted graph 
	performing $o(\log n)$ cell probes amortized by edge change
	and $o(\log n)$ cell probes per query, 
	with cells of size~$O(\log n)$. 
\end{theorem}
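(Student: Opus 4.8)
Since this statement is quoted verbatim from Pătraşcu and Demaine, the honest plan is not to derive it from anything earlier in the paper but to recall the cell-probe argument behind it; I sketch the \emph{information-transfer} (refined chronogram) method they use. The plan is to fix a hard distribution over sequences of $t$ operations and argue, via an information-theoretic charging scheme, that any data structure with word size $b=O(\log n)$ must spend $\Omega(t\log t)$ cell probes in expectation on such a sequence, which forces amortized $\Omega(\log n)$ on either the updates or the queries.

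First I would specify the hard instance. Connectivity is at least as hard as maintaining a product of permutations: build a graph consisting of several layers of $k$ vertices joined by perfect matchings, so that each matching encodes a permutation of $[k]$; an update resets one matching, and a connectivity query between a designated source terminal and a designated sink terminal reads out one coordinate of the \emph{composed} permutation across the layers. Under the uniform distribution on the permutations, the query answers carry $\Theta(\log k)$ bits about each update, and choosing $k=\Theta(n)$ makes a single query reveal a full word of information about the update sequence.

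Next I would set up the timeline tree. Lay the $t$ operations on a line and build a balanced binary tree over them; each node $v$ owns a contiguous block of operations, split into an earlier \emph{left} half and a later \emph{right} half. Define the information transfer $IT(v)$ as the set of cells written during $v$'s left half and then read during $v$'s right half before being overwritten. The combinatorial heart is the charging claim: a read that accesses a cell last written at time $w$ is charged only to the \emph{lowest common ancestor} of $w$ and the read time, so each probe is charged to at most one node and hence the total number of probes is at least $\sum_v |IT(v)|$. The information-theoretic core is the per-node bound: the right half can learn about the left half's updates only through the cells of $IT(v)$, so encoding those cells together with their $O(b)$-bit addresses must convey the $\Omega(\ell_v\cdot b)$ bits that the right-half queries recover about the $\ell_v$ left-half updates; this yields $|IT(v)|=\Omega(\ell_v)$ in expectation. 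Summing over one level of the tree gives $\Omega(t)$, and over the $\Theta(\log t)$ levels gives $\Omega(t\log t)$ total probes, i.e.\ amortized $\Omega(\log n)$. Balancing the number of updates against the number of queries in the sequence converts this aggregate bound into the stated claim that no algorithm can have \emph{both} $o(\log n)$ update and $o(\log n)$ query cost.

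The hard part is the per-node information bound, everything else being bookkeeping: one must verify that the right half genuinely cannot shortcut the information through cells written outside $v$'s block (those are common to both sides of the conditioning) or through cells written within the right half itself, and that the query outputs on the chosen distribution really extract a constant fraction of the left half's entropy. Two further points need care: the word size $b=O(\log n)$ must enter as the per-cell capacity so that the bound comes out as $\Omega(\log n)$ rather than a constant, and the extension to randomized algorithms with error $1/3$ is obtained by fixing the coins against the hard distribution in Yao-minimax fashion, so that the same expected-probe lower bound applies.
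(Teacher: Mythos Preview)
The paper does not prove this theorem at all: it is quoted as a black box from \cite{PatrascuD06} and then invoked in the reduction that establishes Theorem~\ref{thm: mst - lb}. You correctly recognize this and, rather than pretending the paper contains an argument, you sketch the information-transfer (chronogram) proof from the original source. That is the right call, and your outline captures the essential ingredients of the P\u{a}tra\c{s}cu--Demaine method: the layered-matchings hard instance encoding a product of permutations, the balanced binary tree over the operation timeline, the LCA-charging of read/write pairs to nodes, the per-node entropy bound on $|IT(v)|$, the level-by-level summation, and the Yao-minimax step for randomization.

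One imprecision worth flagging: you write that a single connectivity query ``reads out one coordinate of the composed permutation'' and thus carries $\Theta(\log k)$ bits. A connectivity query is a single yes/no answer and carries only one bit; the actual construction interleaves \emph{batches} of $\Theta(k)$ connectivity queries between update blocks, and it is the batch that extracts $\Theta(k\log k)$ bits about the permutations. Relatedly, resetting one matching is $\Theta(k)$ edge insertions/deletions, not one update, so the per-operation amortization has to be tracked through this blow-up on both sides. None of this changes the high-level plan, but the per-node bound $|IT(v)|=\Omega(\ell_v)$ only comes out once the query-side and update-side counts are balanced correctly; as written, your ``$\Theta(\log k)$ bits per query'' shortcut would not close.
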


We prove our lower bound through a reduction to the bound of Pătraşcu and Demaine, without getting into the details of their construction.

\begin{theorem}
	\label{thm: mst - lb}
	There is no dynamic algorithm, even randomized,
	 maintaining minimum weight spanning tree  
	in $o(\log n)$ amortized time per edge-weight change and
	$o(\log n)$ amortized time per query.
\end{theorem}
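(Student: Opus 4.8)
The plan is to reduce the fully-dynamic connectivity problem of Theorem~\ref{thm: connectivity lb PD06} to the weight-dynamic MST problem, so that an MST algorithm beating the $\Omega(\log n)$ barrier would yield a connectivity data structure beating the Pătraşcu–Demaine cell-probe bound; since that bound is unconditional, the resulting MST lower bound is unconditional as well. The fixed topology on which the weight-dynamic algorithm operates will be the complete graph $K_n$ on the same $n$ nodes as the connectivity instance, with every edge taking weight either $1$ or $2$. The encoding is: an edge that is \emph{present} in the connectivity graph is given weight $1$, and an edge that is \emph{absent} is given weight $2$. This stays inside our model, since $W=2$ and every weight change is additive by exactly~$1$.

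The core observation driving the reduction is the relationship between the MST weight and the number of connected components of the present-edge subgraph. If that subgraph has $k$ connected components, a minimum spanning tree of $K_n$ under this weighting uses $n-k$ weight-$1$ edges (a spanning forest within the components) and is forced to use $k-1$ weight-$2$ edges to bridge the components, since no weight-$1$ edge crosses components by definition; the total weight is therefore $(n-k)+2(k-1)=n+k-2$. This is minimized, equal to $n-1$, precisely when $k=1$, i.e., when the present-edge graph is connected, and is strictly larger when $k\ge 2$. Hence a single MST-weight query, together with an $O(1)$ comparison against $n-1$, answers a connectivity query.

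I would then set up the simulation step by step. Initialize $K_n$ so that every edge present in the initial connectivity graph has weight $1$ and every other edge weight $2$, and run the MST initialization. Each ``insert edge~$e$'' becomes a single weight change of $e$ from $2$ to $1$; each ``delete edge~$e$'' becomes a single weight change from $1$ to $2$; each connectivity query becomes one MST-weight query followed by the $O(1)$ test above. Thus every connectivity operation is simulated by $O(1)$ weight changes and at most one query, with only $O(1)$ additional bookkeeping. If the MST algorithm ran in $o(\log n)$ amortized time per change and $o(\log n)$ per query, the composed procedure would solve dynamic connectivity in $o(\log n)$ amortized time per operation, contradicting Theorem~\ref{thm: connectivity lb PD06}; since the translation is deterministic and preserves error probability, this holds for randomized algorithms as well.

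The one point that requires care --- rather than a genuine obstacle --- is making the complexity accounting airtight across the two cost models. Theorem~\ref{thm: connectivity lb PD06} is stated in the cell-probe model, which lower-bounds running time; the reduction must therefore argue that the $O(1)$ per-operation overhead (locating the edge, issuing the weight change, comparing to $n-1$) does not erase the $o(\log n)$ versus $\Omega(\log n)$ gap, and that the node count $n$ is identical on both sides so the logarithms match. Both are immediate, so the reduction goes through cleanly; the substantive content is entirely in the component-counting identity of the second paragraph.
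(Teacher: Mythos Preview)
Your proposal is correct and follows essentially the same reduction as the paper: encode the dynamic connectivity instance in a complete graph with weights in $\{1,2\}$ (present edges $\mapsto 1$, absent $\mapsto 2$), simulate each insertion/deletion by a single $\pm 1$ weight change, and answer connectivity by testing whether the MST weight equals $n-1$. Your component-counting identity $w(\mathrm{MST})=n+k-2$ is a slightly more explicit justification than the paper's direct ``weight $n-1$ forces all tree edges to be weight~$1$'' argument, but the construction and the accounting are identical.
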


\begin{proof}
	Consider a data structure $MST$ that supports two operations: setting the weight of a given edge, and checking the MST size.
	We construct a dynamic graph $H$ maintained by this data structure, and show that if this data structure can be maintained in $o(\log n)$ cell-probe operations,
	and answer queries in $o(\log n)$ cell-probe operation,
	then so does the connectivity of an unweighted graph $G$, contradicting Theorem~\ref{thm: connectivity lb PD06}.
	
	We design a data structure for connectivity, that supports edge addition and deletion and connectivity queries, as follows.
	Let $G$ be an unweighted, dynamic graph, which is initially empty. 
	Our data structure will maintain a graph $H$, which is a complete weighted graph on the same set of nodes as $G$.
	All the edges of $H$ start with weight $2$.
	Whenever an edge $(u,v)$ is added to $G$, the weight of the corresponding edge $(u,v)$ in $G$ is reduced to $1$. When an edge is removed from $G$, the weight of the corresponding edge in $H$ is increased back to $2$.
	Finally, for a connectivity query in $G$, we make an MST-size query in $H$, and answer on the affirmative iff the MST size is $n-1$.
	
	The query complexity lower bound is now immediate, as every operation on $G$ is translated to a single operation on $H$. To show correctness note the following invariant: the edges of $G$ are exactly the edges that have weight $1$ in $H$, while all other edges of $H$ have weight $2$.
	Now, note that if $G$ is connected, then it has a spanning tree of $n-1$ edges; in $H$, the same edges form a spanning tree of weight $n-1$.
	On the other hand, if $H$ has a spanning tree of weight $n-1$, then this spanning tree must consist solely of edges of weight $1$, i.e., edges that exist in $G$.
	The same spanning tree thus appears in $G$, which is therefore connected, completing the proof.
\end{proof}

\section{Related Work}
\label{sec:relwork}

We have recently witnessed many great efforts to
render networks more flexible and adaptive, from the application
layer, over the transport layer, network layer,
link layer, down to the physical layer, to just reference some examples~\cite{pieee19}.
For the specific case study of traffic engineering considered in this paper, 
empirical works show that adjusting OSPF weights and hence rerouting traffic multiple times per day, allows to improve network efficiency and account 
for time-of-day effects and to some extent circadian effects~\cite{b4,fischer2006replex}.
Accordingly, there has recently been much interest in more dynamic traffic engineering mechanisms~\cite{fischer2006replex,kumar2018semi}, which also find first deployments~\cite{b4}. 
That said, the earliest approaches date back to the Arpanet~\cite{khanna1989revised}.
While the high runtime of traffic engineering algorithms was often a main concern in the literature~\cite{traffic-engineering1,vissicchio2014sweet,b4}, we are not aware of any work on \emph{dynamic} algorithms to improve performance, which is the focus of this paper.

Dynamic graph algorithms have received much attention
in the literature. 
In the following, we briefly review related work 
on fully dynamic graph algorithms (both edge
insertions and deletions), and refer the reader to surveys on the topic for a further discussion~\cite{Henzinger18,DBLP:journals/jda/DemetrescuI06}. 

For the shortest path problem there is a lower bound conditioned on the \omv{} conjecture showing that for any $\epsilon > 0$  and $\delta > 0$ no $(5/3-\delta)$-approximation algorithm for the shortest $(s,t)$ problem can achieve
$\Omega(n^{1/\epsilon})$ amortized time per edge insertion, edge deletion, or query~\cite{HenzingerKNS15}. This lower bound even holds in \emph{unweighted, undirected} graphs. There are also already many interesting upper bounds in different scenarios, depending on whether the graph is directed or undirected, weighted or unweighted, whether the running time is amortized or worst-case, and whether the adversary that generates the sequence of operations is oblivious or adaptive. We restrict ourselves here to amortized running against an adaptive adversary and the results we state apply to weighted, directed and undirected graphs. The fastest fully dynamic exact all-pairs-shortest path algorithm takes time $O(n^2 (\log n + \log^2 (m/n)))$ per edge update operation and constant time per query~\cite{DBLP:journals/jacm/DemetrescuI04,DBLP:conf/swat/Thorup04}.
For single-source shortest paths the fastest exact fully dynamic algorithm  is still $O(m)$; 
in  graphs with real edge weights in $[1, W]$ there exists a $(1+\epsilon)$-approximation algorithm in time $O(n^{1.823}/ \epsilon^2)$ per update for any small $\epsilon > 0$ using fast matrix multiplication and $O(n^{2.621})$ preprocessing time~\cite{van2019dynamic}.
There are also algorithms that realize interesting tradeoffs between approximation ratio and query time~\cite{DBLP:journals/corr/abs-2004-10319}, e.g., guaranteeing a $O(\log^4 n)$-approximation with $O(m^{1/2 + o(1)})$ time per operation, or a $n^{o(1)}$-approximation with $O(n^{o(1)})$ update time.

Dynamicity restricted to edge-weight changes was much less studied. Single source shortest paths can be maintained in this case in $O(\log n)$ worst-case update time in specific graph families, such as graphs with bounded genus, and in $O(\sqrt{m}\log n)$ update time for general graphs, all with constant query time~\cite{FrigioniMN00}.
APSP can be maintained in graphs of treewidth $k$ in $O(k^3\log n)$ update time and $O(k^2\log(n)\log(k\log n))$ query time~\cite{AbrahamCDGW16}.
A $(1+\epsilon)$-approximation for distance queries (i.e., APSP), can be maintained in planar graphs when the edge-weight changes are such that no distance in the graph is stretch by more than a multiplicative factor $M$, in $\tilde{O}(M^4/\epsilon^3)$ update and query time~\cite{AbrahamCDGW16}.

There exists no exact fully dynamic $s$-$t$ flow algorithm that is faster than recomputation from scratch, and there is a conditional lower bound based on the \omv{} conjecture of $\Omega(n^{1 - \epsilon})$ time per operation~\cite{Dahlgaard16} and also based on other conjectures~\cite{abboud2018matching}. 
For the case of only edge-weight changes, there is an algorithm that has the same worst-case guarantees as recomputation, but better performance on real-world cases~\cite{GoldbergHKKTW15}.
The fully dynamic algorithm we present here also leads to an insertions-only or deletions-only algorithm with amortized time $O(n)$ per operation~\cite{DBLP:journals/corr/abs-1804-01823}. Nonetheless, various approximation algorithms exist: An $O(\log n \log \log n)$-approximation in time $\tilde O(m^{3/4})$~\cite{DBLP:journals/corr/abs-2005-02368}
and a $n^{o(1)}$-approximation in $O(n^{o(1)})$ update time~\cite{goranci2020expander}.

Finally, there exists an $\Omega(\log n)$ lower bound on the time per operation for any fully dynamic minimum spanning tree algorithm~\cite{PatrascuD06}.
Regarding upper bounds, there is an $O(\log^4 n /\log \log n)$ time exact fully dynamic minimum spanning tree algorithm~\cite{HolmRW15}
and a $O(\log n (\log \log n)^2)$-time algorithm for connectivity~\cite{HuangHKP17} that can be turned into a $(1+\epsilon)$-approximate minimum spanning tree algorithm by partitioning the edges into weight classes and maintaining for each weight class a spanning forest that contains all edges of the spanning forests of the smaller weight class.

However, we are not aware of any study on dynamic
graph algorithms for link weight changes.

\section{Conclusion}
\label{sec:conclusion}

Motivated by networking
applications in which resource allocations depend on 
(and are controlled
by) link weights, we initiated the study of
weight-dynamic network algorithms. 
Considering different applications, from
traffic engineering to spanning tree constructions,
we derived bounds on the potential speedup of
dynamic algorithms compared to computation from scratch.

While most traffic engineering solutions today 
are still designed for fairly static scenarios, 
this can come at the price
of a suboptimal resource utilization or performance: by not reacting to
specific changes in the demand, a network may temporarily become overutilized
(harming performance) or underutilized (harming efficiency), depending on the situation.
Since traffic patterns typically indeed feature much temporal
structure, this is problematic.
We hence believe that our study aligns well with the 
current trend toward more adaptive
communication networks,
and can inform the networking community what can and cannot be achieved with such dynamic algorithms.
We further believe that our notion of 
``weight-dynamic'' network algorithms may also be of interest to the theory community, and can lead to interesting follow-up work.

\bibliographystyle{plain}
\bibliography{biblio}

\end{document}